\newcommand{\expect}[1]{{\mathbb{E}\left[{#1}\right]}}
\newcommand{\pr}{{\mathbb{P}}}
\newcommand{\infrate}{r_\mathrm{min}}
\newcommand{\hgc}{\mathrm{C}}
\newcommand{\srtransform}{\mathrm{T}}
\newcommand{\querysize}{q}
\newcommand{\power}[1]{\mathrm{P}_{#1}}
\newcommand{\res}{b}
\newcommand{\dnsty}{{\lambda}}
\newcommand{\ndnsty}{{\hat{\lambda}}}
\newcommand{\mse}[1]{m_{#1}}
\newcommand{\avmse}[1]{\bar{m}_{#1}}
\newcommand{\userdnsty}{\lambda_u}
\newcommand{\real}[1]{\mathbb{R}^{#1}}
\newcommand{\cc}[1]{\mathcal{B}_{#1}}
\newcommand{\SINR}{\mathtt{SINR}}
\newcommand{\pl}{L}
\newcommand{\SNR}{\mathtt{SNR}}
\newcommand{\delrnd}{D}
\newcommand{\del}[1]{d_{#1}}
\newcommand{\rate}{\mathtt{Rate}}
\newcommand{\PPP}{\Phi}
\newcommand{\PPPu}{\Phi_{u}}
\newcommand{\load}[1]{{N}_{#1}}
\newcommand{\avload}[1]{\bar{n}_{#1}}
\newcommand{\sha}{S}
\newcommand{\indic}{\mathbbm{1}}
\newcommand{\dcov}{\mathcal{D}}
\newtheorem{cor}{Corollary}
\newtheorem{lem}{Lemma}
\theoremstyle{definition}
\theoremstyle{thm}
\newtheorem{asmptn}{Assumption}
\begin{document}
\title{On Provisioning Cellular Networks for Distributed Inference}

\author{\IEEEauthorblockN{Sarabjot Singh }

\thanks{ The author is with Uhana Inc. Palo Alto, CA.
}}
\maketitle

\begin{abstract}
Wireless traffic  attributable to  machine learning (ML) inference workloads is increasing with the proliferation of applications and smart wireless devices leveraging ML inference. Owing to  limited compute capabilities at these ``edge" devices, achieving  high inference accuracy often requires coordination   with a remote compute node or ``cloud" over the wireless cellular network. The accuracy of this distributed inference is, thus, impacted by the communication rate and reliability offered by the cellular network. In this paper, an analytical framework  is proposed to characterize  inference accuracy  as a function of   cellular network design. Using the developed framework, it is shown that cellular network should be provisioned with a minimum density of  access points (APs)  to guarantee a target inference accuracy, and the  inference accuracy achievable at asymptotically high AP density is limited by the air-interface   bandwidth.
Furthermore, the minimum accuracy required of  edge inference to deliver a target inference accuracy is   shown to be inversely proportional to the density of APs and the bandwidth.
\end{abstract}

\section{Introduction}
With the proliferation of  machine learning (ML) enabled applications  (e.g. Siri, GoogleHome, Alexa, etc.) and   wireless devices (e.g. smart speakers, phones,   wireless cameras), the traffic attributable  to such applications is bound to increase \cite{cisco}. These applications gather data, like voice commands or security videos, through  smart devices and run inference leveraging  ML model(s)\footnote{The term ``model"  refers to a parametric mapping function (e.g. neural network, decision tree, random forest, etc.) fitted using  a data driven training procedure.}, which are becoming increasingly computationally expensive. Owing to the limited compute, power, and  storage capabilities at these devices or ``edge", achieving high inference accuracy  with low delay   is challenging \cite{WuBCCCDHIJJLLLQ19}. As a result,   offloading of inference is attractive,   wherein a  partial (or full) inference may run on a  remote compute resource (with more computational power)  or ``cloud". Wireless cellular network, being the communication medium, plays a key role in enabling such a coordination between  edge and cloud.

Providing seamless end-user experience  for these applications creates unprecedented challenges for wireless network operators.   Cellular network's rate  and  reliability, being inherently bounded, has to be appropriately  provisioned to meet the delay and accuracy demands of this distributed inference paradigm. Moreover, the application developers also need to be cognizant of the uncertainty added by the communication delay to the inference accuracy and, in turn, application  performance, and design the edge model appropriately.

Most of the prior work in cellular network design for computation offload has focussed on the design of offloading strategies aiming to optimize for a myriad of objectives (see \cite{MaoYouZhaHuaLet17,MacBec17} for a survery).  The work in \cite{KoHanHua18} proposed a stochastic geometry based model for a wireless mobile edge computing network and characterized the average computation and communication latency as   function of network  parameters. The analysis was further extended by \cite{ParLee18},  which considered the impact of user association in a heterogeneous cellular network on the communication/compute delay distribution. The work in \cite{MukLee18} characterized the impact of these delays on energy consumption. None of the prior works, however,  characterized the impact of cellular network  design on  inference accuracy achievable in a distributed inference framework, and the consequent inter-play in  provisioning of cellular network and the edge  inference accuracy. This paper is aimed to bridge this gap.

This paper proposes a tractable approach to characterize the impact of cellular network design on application inference performance for distributed inference. In particular, application inference accuracy, as measured by mean squared error (MSE)\footnote{Inference MSE implies the inverse of accuracy throughout this paper.},  is derived as a function of the key system parameters: density of the wireless access points (APs), transmission bandwidth, edge and cloud model's accuracy. Using the developed framework,
\begin{itemize}
\item it is shown that average inference accuracy improves with increasing AP density, but saturates at a level inversely proportional to  the  bandwidth.
\item the minimum AP density  required to achieve a target inference accuracy for a given edge and cloud inference accuracy  is analytically characterized, and shown to   decrease with bandwidth.
\item the  minimum edge model accuracy required  to guarantee an overall target inference accuracy is derived and it is shown that higher AP density and/or bandwidth allows application developer to use a less accurate edge model, while guaranteeing the overall target inference accuracy.
\end{itemize}

\section{System Model}\label{sec:sysmodel}

The APs  are assumed to be distributed uniformly in $\real{2}$ as a homogeneous Poisson Point process (PPP)  $\PPP$ of density $\dnsty$. The devices in the network are assumed to be distributed according to an independent homogeneous PPP $\PPPu$ with density $\userdnsty$.    
The power received from an AP at $X \in \real{2}$ transmitting with power $\power{X}$ at a device at $Y \in \real{2}$  is $\power{X} H_{X,Y}\pl(X,Y)^{-1}$, where $H\in \real{+}$ is the fast fading power gain  and assumed to be Rayleigh distributed with unit average power, i.e.,  $H \sim \exp(1)$, and $\pl(X,Y) \triangleq \sha_{X,Y} \|X-Y\|^{4}$, where $\sha\in \real{+}$ denotes the large scale fading (or shadowing). Both small and large scale  fading are  assumed i.i.d across all device-AP pair.  The analysis in this paper is done for a \textit{typical} device located at the origin.   

\subsection{Uplink $\SINR$  and $\rate$}
Let  $\pl_{X}$  be the path loss  between the device at $X \in \real{2}$ and its serving AP. A full pathloss-inversion based power control  is assumed for   uplink transmission, where  a device at $X$ transmits with a power spectral density (dBm/Hz) $\power{X}=  \power{u} \pl_{X}$,  and $\power{u}$ is the open loop power spectral density.  Orthogonal access is assumed in the uplink  and    hence at any given resource block, there is at most one device transmitting in each cell.  Let $\PPPu^b$ be the point process  denoting the location of devices transmitting on the same resource as the typical device. The uplink $\SINR$ of the typical device (at origin) on a given resource block is
\begin{equation}\label{eq:sinr}
\SINR_u =\frac{ H_{0,\cc{0}}}{\SNR^{-1} + \sum_{X\in \PPPu^b}\pl_{X} H_{X,\cc{0}} \pl(X,\cc{0})^{-1}},
\end{equation}
where $\cc{0}$ denotes the  AP serving the typical device, $\SNR \triangleq \frac{\power{u}  \mathrm{L_0}}{{\mathrm{N}_0}}$ with $\mathrm{N}_0$ being the thermal noise spectral density, and $\mathrm{L_0}$ is the free space path loss at a reference distance. Every device is assumed to be using minimum path loss for association and is assumed that each AP has at least one device   with data to transmit in uplink.  Assuming an equal partitioning  of the total  uplink resources at an AP among the associated uplink  users, the  uplink rate of the typical device is
\begin{equation}\label{eq:ratemodel}
\rate_u = \frac{\res}{\load{}}\log\left(1+\SINR_u\right),
\end{equation}
where $\res$ is the uplink bandwidth,   $\load{}$ denotes the total number of devices served by the AP. Along similar lines (as in \cite{SinZhaAnd14}), downlink rate $\rate_d$ can be defined.

\subsection{Inference Framework}
An inference framework  is assumed wherein, for each inference input (e.g. a chunk of speech or an image), denoted by $x$, the device transmits the inference input  to the cloud while,  concurrently,  computing a local inference output (say $y_{d}$) using the edge model. If the device receives the inference result  from the cloud (say  $y_{c}$)    within target delay budget (denoted by $\del{t}$), it is used as the final output  $y_{o}$;   otherwise the device uses the edge model's output $y_{d}$. Therefore, 
\begin{equation}\label{eq:inferenceoutput}
y_o  = \begin{cases}
	y_c, & \text{ if }  \delrnd \leq \del{t},\\
	y_{d}, & \text{ otherwise, }
	\end{cases}
\end{equation}
where $\delrnd$ denotes the cumulative delay incurred in receiving $y_{c}$ at the device. 
Assuming inference input and output has a fixed (over the air) payload size, i.e.   $|x| = |y| = q/2$, the cloud inference delay is
\begin{equation}\label{eq:delay}
\delrnd  =  \frac{q}{2\rate_u} + \frac{q}{2\rate_d}  + \del{c},
\end{equation}
where the first two terms correspond to communication delays (in uplink and downlink respectively) and $\del{c}$ is   the compute delay (assumed to be fixed\footnote{the communication delay associated with the cloud transport network is assumed to be incorporated in $\del{c}$.}) incurred by the cloud inference model.  

Denoting the actual inference output by $y$, device and cloud model inference accuracy are defined by their mean square errors (MSE's), 
\[\mse{d} = \expect{(y-y_d)^2} \text{  and } \mse{c} = \expect{(y-y_c)^2},\] respectively, where the expectation is over the data distribution, and cloud model's accuracy is assumed to be more accurate than that of edge model, i.e.,  $\mse{c}\leq \mse{d} $.
 
As a result of the inference mechanism in (\ref{eq:inferenceoutput}), the average  MSE (denoted by $\avmse{}$) for a typical device  is
\begin{equation}\label{eq:mse}
\avmse{} \triangleq \expect{ \indic( \delrnd \leq \del{t})\mse{c}} + \expect{\indic(\delrnd > \del{t})\mse{d}},
\end{equation}
where $\indic(A)$ denotes the indicator of the event $A$.

 The notation used in this paper  is summarized in Table \ref{tbl:param}.
\begin{table}
	\centering
	\caption{Notation and simulation parameters}
	\label{tbl:param}
	\begin{tabulary}{\columnwidth}{ |C | L | L|}
		\hline
		\textbf{ Notation} & \textbf{Parameter} & \textbf{Value (if not specified }\\\hline
		$\SINR_x$, $\rate_x$ & Uplink (x:u), downlink (x:d) $\SINR$ and $\rate$   &  \\\hline
		$\mse{d}$, $\mse{c}$, $\mse{t}$, $\mse{}$ & Edge, cloud, target, and average inference MSE  &  $\mse{d} = 1.5\mse{c}$\\\hline
		$\dnsty$, $\userdnsty$ & density of APs and devices  & \\\hline
		$\res$ &  uplink transmission bandwidth &  \\  \hline
		$\load{}$ & number of uplink devices in AP serving the typical device  & \\\hline
		$\del{t}$, $\del{c}$, $\delrnd$ & target delay budget, cloud compute delay, and cloud inference delay &  \\ \hline
		$\querysize$ & cumulative (uplink and downlink) size  of cloud inference input and output &   \\\hline
		$\infrate$ & inference rate & $\frac{q}{\res(\del{t}-\del{c})}$\\\hline
	\end{tabulary}
\end{table}

\section{Inference Accuracy}\label{sec:analysis}
The accuracy of the distributed inference model is characterized as a function of the network parameters in this section. The following two assumptions are taken to simplify the analysis. 
\begin{asmptn}\label{asmptn:rate}
The downlink rate of a typical device is assumed to be equal to that of the uplink. 
\end{asmptn}
In \cite{SinZhaAnd14}, it was shown that downlink rate stochastically dominates uplink rate, hence the above assumption leads to over-estimation of the cloud inference delay in (\ref{eq:delay}), which simplifies to
\begin{equation}\label{eq:delayrnd}
\delrnd  =  \frac{q}{\rate_u}   + \del{c}.
\end{equation}
Using (\ref{eq:inferenceoutput}) and (\ref{eq:delayrnd}), the minimum uplink rate required for a device to use cloud inference, i.e. $y_o = y_c$, is  $\frac{\querysize}{\del{t}-\del{c}}$. Henceforth, this minimum rate normalized by the  transmission bandwidth, $\infrate \triangleq \frac{\querysize}{\res(\del{t}-\del{c})}$,  is referred to as the \textit{inference rate}.
\begin{asmptn}
The load on the AP serving the typical device is assumed to be constant and equal to its average value (denoted by $\avload{}$), i.e.,  
\begin{equation}\label{eq:avload}
\load{} \approx \avload{} = 1+\frac{1.28}{\ndnsty},
 \end{equation}
where $\ndnsty \triangleq \dnsty/\userdnsty$. 
\end{asmptn}
This assumption was taken in past works (see \cite{SinZhaAnd14} and references therein)  without loss of generality of design insights.

\begin{lem}\label{lem:delaydist}
\textbf{Delay distribution.}
The cloud inference delay distribution experienced by a typical device is  
\begin{multline}
\dcov(d) \triangleq   \pr(\delrnd \leq \del{}) = \exp\left(-\hgc\circ{\srtransform\left(\avload{}\infrate\frac{\del{t}-\del{c}}{\del{}-\del{c}}\right)} \right)\\ \, \, \forall \del{} > \del{c},
\end{multline}
where $\hgc(x) = \sqrt{x}\arctan(\sqrt{x})$ and  $\srtransform(x) = 2^{x}-1$. 
\end{lem}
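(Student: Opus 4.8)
The plan is to peel the statement back from the delay all the way to an $\SINR$ coverage probability, and then to evaluate that coverage probability in closed form. First I would rewrite the target event $\{\delrnd\le\del{}\}$ in terms of $\SINR_u$. By the simplified delay in (\ref{eq:delayrnd}), $\delrnd\le\del{}$ iff $\rate_u\ge\querysize/(\del{}-\del{c})$, and substituting the rate model (\ref{eq:ratemodel}) with the constant load $\load{}\approx\avload{}$ of Assumption~2 gives
\begin{equation*}
\SINR_u\ge 2^{\frac{\avload{}\,\querysize}{\res(\del{}-\del{c})}}-1.
\end{equation*}
Since $\infrate=\querysize/(\res(\del{t}-\del{c}))$, the exponent equals $\avload{}\infrate\frac{\del{t}-\del{c}}{\del{}-\del{c}}$, so the threshold is exactly $\srtransform(\avload{}\infrate\frac{\del{t}-\del{c}}{\del{}-\del{c}})=:\SINRthresh$ (finite and positive precisely for $\del{}>\del{c}$, as in the statement), and $\dcov(\del{})=\pr(\SINR_u\ge\SINRthresh)$. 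It then suffices to show $\pr(\SINR_u\ge\SINRthresh)=\exp(-\hgc(\SINRthresh))$.

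Next I would convert coverage into a Laplace transform. Because the desired-link fading $H_{0,\cc{0}}$ is unit-mean exponential and independent of the interference $\intrfrnc=\sum_{X\in\PPPu^b}\pl_X H_{X,\cc{0}}\pl(X,\cc{0})^{-1}$, conditioning on $\intrfrnc$ and the noise term gives
\begin{equation*}
\pr(\SINR_u\ge\SINRthresh)=\expect{e^{-\SINRthresh(\SNR^{-1}+\intrfrnc)}}=e^{-\SINRthresh/\SNR}\,\mgf_{\intrfrnc}(\SINRthresh).
\end{equation*}
The closed form in the statement carries no $\SNR$ dependence, so I would work in the interference-limited regime (equivalently $\SNR\to\infty$, appropriate for the dense deployments of interest), where the prefactor is $1$ and $\dcov(\del{})=\mgf_{\intrfrnc}(\SINRthresh)$.

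The core step is evaluating $\mgf_{\intrfrnc}(\SINRthresh)$. Here I would use the standard uplink simplifications (as in \cite{SinZhaAnd14}): orthogonal access puts one interferer per cell, so $\PPPu^b$ is approximated by a PPP of the AP density $\dnsty$; full path-loss inversion sets each interferer's transmit term to its own serving path loss $\pl_X$; and minimum-path-loss association forces $\pl_X\le\pl(X,\cc{0})$. The natural variable is therefore the effective path-loss ratio $u_X\triangleq\pl(X,\cc{0})/\pl_X\ge 1$, in which the per-interferer contribution is $H_X/u_X$. Mapping the planar PPP and i.i.d.\ shadowing through $\pl(\cdot)=\sha\|\cdot\|^4$ turns $\{u_X\}$ into a one-dimensional PPP on $[1,\infty)$; crucially, the power-control normalization cancels the density and shadowing prefactors and fixes the guard-region edge at $u=1$, so the intensity is $\tfrac12 u^{-1/2}\,du$. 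Applying the PGFL and the exponential-fading identity $\expect{e^{-aH}}=1/(1+a)$,
\begin{equation*}
\mgf_{\intrfrnc}(\SINRthresh)=\exp\!\left(-\int_{1}^{\infty}\frac{\SINRthresh/u}{1+\SINRthresh/u}\,\tfrac12 u^{-1/2}\,du\right),
\end{equation*}
and the substitution $u=\SINRthresh\,t^2$ collapses the exponent to $\sqrt{\SINRthresh}\int_{1/\sqrt{\SINRthresh}}^{\infty}\frac{dt}{1+t^2}=\sqrt{\SINRthresh}\arctan\sqrt{\SINRthresh}=\hgc(\SINRthresh)$, which is the claim.

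The main obstacle is this third step: the uplink interference field $\PPPu^b$ is genuinely not a PPP, because per-cell scheduling, the association rule, and the path-loss-inversion powers couple the interferer locations to one another and to $\cc{0}$. The real work is justifying the reduction to an independent PPP in the $u$-domain and verifying that the power-control normalization is exactly what pins the guard edge to $u=1$ and the coefficient to unity --- without that normalization one obtains a generic $\exp(-c\sqrt{\SINRthresh})$ rather than the $\arctan$ form. I would lean on the displacement/mapping theorem and the interference approximations established in \cite{SinZhaAnd14} to make this rigorous, treating the constant-load substitution of Assumption~2 and the interference-limited reduction as the licensed simplifications.
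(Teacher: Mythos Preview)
Your proposal is correct and follows the same route as the paper: reduce $\{\delrnd\le\del{}\}$ to a rate threshold via (\ref{eq:delayrnd}), then to the $\SINR$ threshold $\srtransform(\avload{}\infrate\frac{\del{t}-\del{c}}{\del{}-\del{c}})$ via (\ref{eq:ratemodel}) and Assumption~2, and finally invoke the uplink $\SINR$ distribution. The only difference is that the paper simply cites \cite{SinZhaAnd14} for $\pr(\SINR_u\ge\tau)=\exp(-\hgc(\tau))$, whereas you unpack that citation --- your Laplace-transform/PGFL sketch (including the explicit interference-limited reduction that drops the $e^{-\tau/\SNR}$ prefactor, which the paper leaves implicit) is exactly the content of that reference.
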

\begin{proof}
See Appendix \ref{sec:proofdelaydist}.
\end{proof}
 Since both $\hgc$ and $\srtransform$ are monotonically increasing function, cloud inference delay is   proportional to the   inference rate and average load.  
 
\begin{lem}\label{lem:avmse}
\textbf{Average MSE.}
The average output inference MSE of  a typical device is 
\begin{equation}
\avmse{} = \mse{d} - (\mse{d}-\mse{c})\exp\left(-\hgc\circ{\srtransform\left(\avload{}\infrate \right)}\right).
\end{equation}
\end{lem}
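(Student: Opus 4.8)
The plan is to evaluate the two expectations in the definition (\ref{eq:mse}) of $\avmse{}$ directly and then reduce them to the delay distribution of Lemma \ref{lem:delaydist}. First I would observe that the edge and cloud model accuracies $\mse{d}$ and $\mse{c}$ are deterministic constants: by definition they are expectations taken over the data distribution, which is statistically independent of the network realization and hence of the delay $\delrnd$ (a function of the uplink rate, and therefore of the interfering point process $\PPPu^b$ and the fading). Consequently both quantities factor out of the outer expectation over the network, yielding
\[
\avmse{} = \mse{c}\,\expect{\indic(\delrnd \leq \del{t})} + \mse{d}\,\expect{\indic(\delrnd > \del{t})}.
\]

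Next I would rewrite each indicator expectation as a probability, using $\expect{\indic(\delrnd \leq \del{t})} = \pr(\delrnd \leq \del{t}) = \dcov(\del{t})$ and $\expect{\indic(\delrnd > \del{t})} = 1 - \dcov(\del{t})$, and then invoke Lemma \ref{lem:delaydist}. Evaluating the delay CDF at $\del{} = \del{t}$ collapses the ratio $\frac{\del{t}-\del{c}}{\del{}-\del{c}}$ to unity, so that $\dcov(\del{t}) = \exp\!\left(-\hgc\circ{\srtransform(\avload{}\infrate)}\right)$. Substituting these into the expression above and collecting the terms in $\dcov(\del{t})$ gives $\avmse{} = \mse{c}\,\dcov(\del{t}) + \mse{d}\,(1-\dcov(\del{t})) = \mse{d} - (\mse{d}-\mse{c})\,\dcov(\del{t})$, which is precisely the claimed identity.

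The derivation is essentially a two-line computation once Lemma \ref{lem:delaydist} is available, so there is no substantial analytical obstacle here; all of the technical difficulty has already been absorbed into establishing the delay distribution. The only point requiring genuine care — the closest thing to an obstacle — is justifying the factorization in the first step, namely that the model MSEs (averages over the data distribution) are independent of the communication delay (a functional of the spatial and fading randomness), so that the joint expectation splits. This independence is built into the modeling assumptions, after which the remainder is routine algebra and a single substitution from Lemma \ref{lem:delaydist}.
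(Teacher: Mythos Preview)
Your proposal is correct and matches the paper's own argument, which simply states that the result follows by combining Lemma~\ref{lem:delaydist} with (\ref{eq:mse}). You have merely spelled out the two routine steps (factoring out the deterministic $\mse{c},\mse{d}$ and evaluating $\dcov$ at $\del{}=\del{t}$) that the paper leaves implicit.
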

\begin{IEEEproof}
Follows by using  Lemma  \ref{lem:delaydist} with (\ref{eq:mse}).
\end{IEEEproof}
As can be observed from Lemma \ref{lem:avmse}, $\mse{c} \leq \avmse{} \leq \mse{d}$.
The term on the right captures the average MSE improvement provided by the cloud inference. This improvement diminishes with increasing  inference rate ($\infrate$) and higher device load ($\avload{}$).  The following section further formalizes these insights. 

\section{Performance analysis and insights}
 Figure \ref{fig:msevsdensity} shows the variation of  average inference MSE normalized by the cloud MSE (i.e. $ \avmse{}/\mse{c}$) with the normalized  AP density ($\ndnsty$) and inference rate\footnote{the values of system parameters not specified explicitly are as per Table \ref{tbl:param}.}. As observed, for a given AP density average inference MSE increases with $\infrate$ (or decreases with bandwidth) and approaches that of the edge, as with the increase in communication delay device has to rely on the edge model output. Moreover, average MSE decreases with increasing AP density, but saturates eventually. This is formalized with the following.
 \begin{figure}
	\centering
\includegraphics[scale=1, width=0.95\columnwidth]{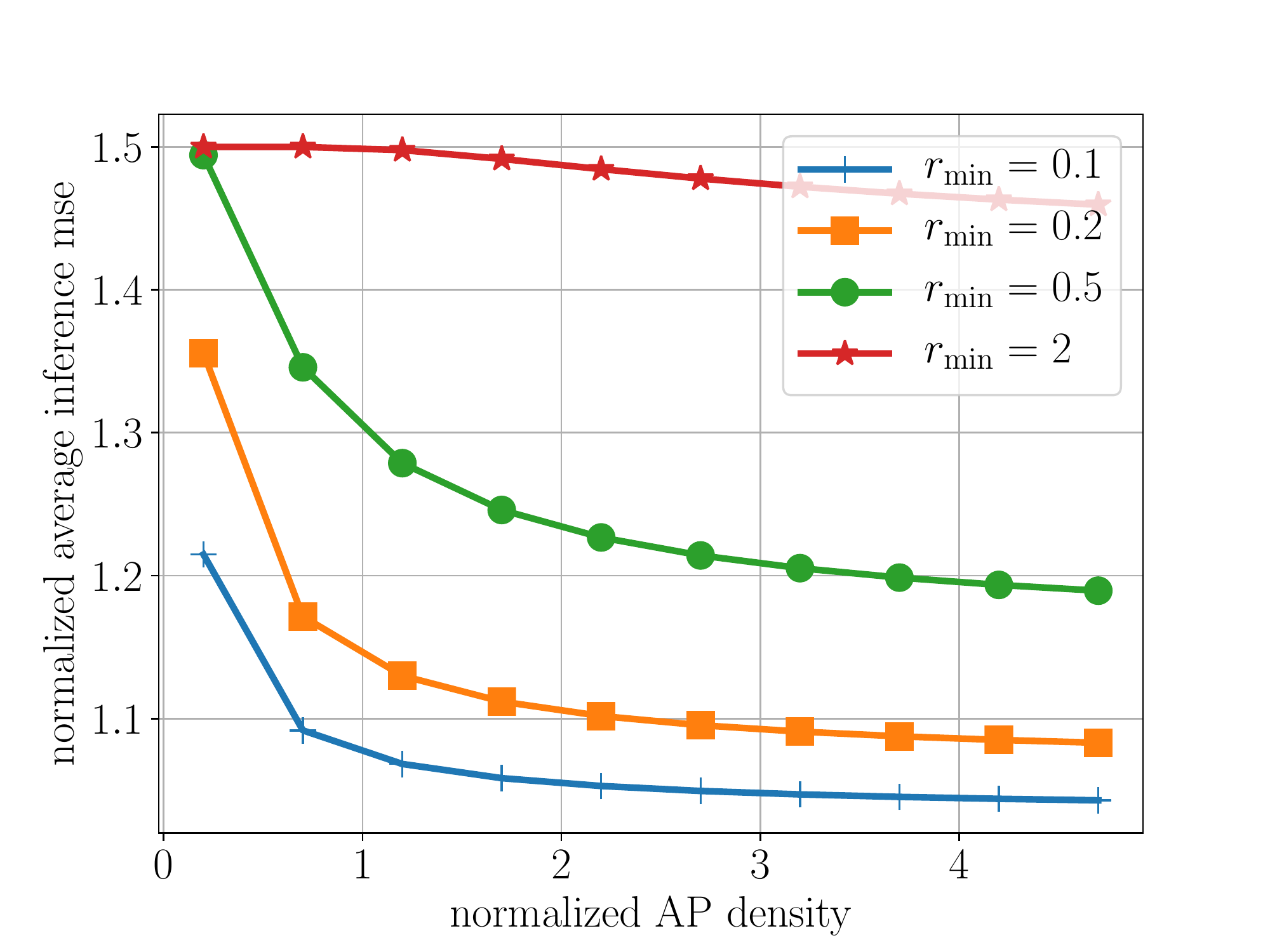} 
		\caption{Variation of average inference MSE with AP density at different inference rates.}
	\label{fig:msevsdensity}
\end{figure}
\begin{figure}
	\centering
\includegraphics[scale=1, width=0.95\columnwidth]{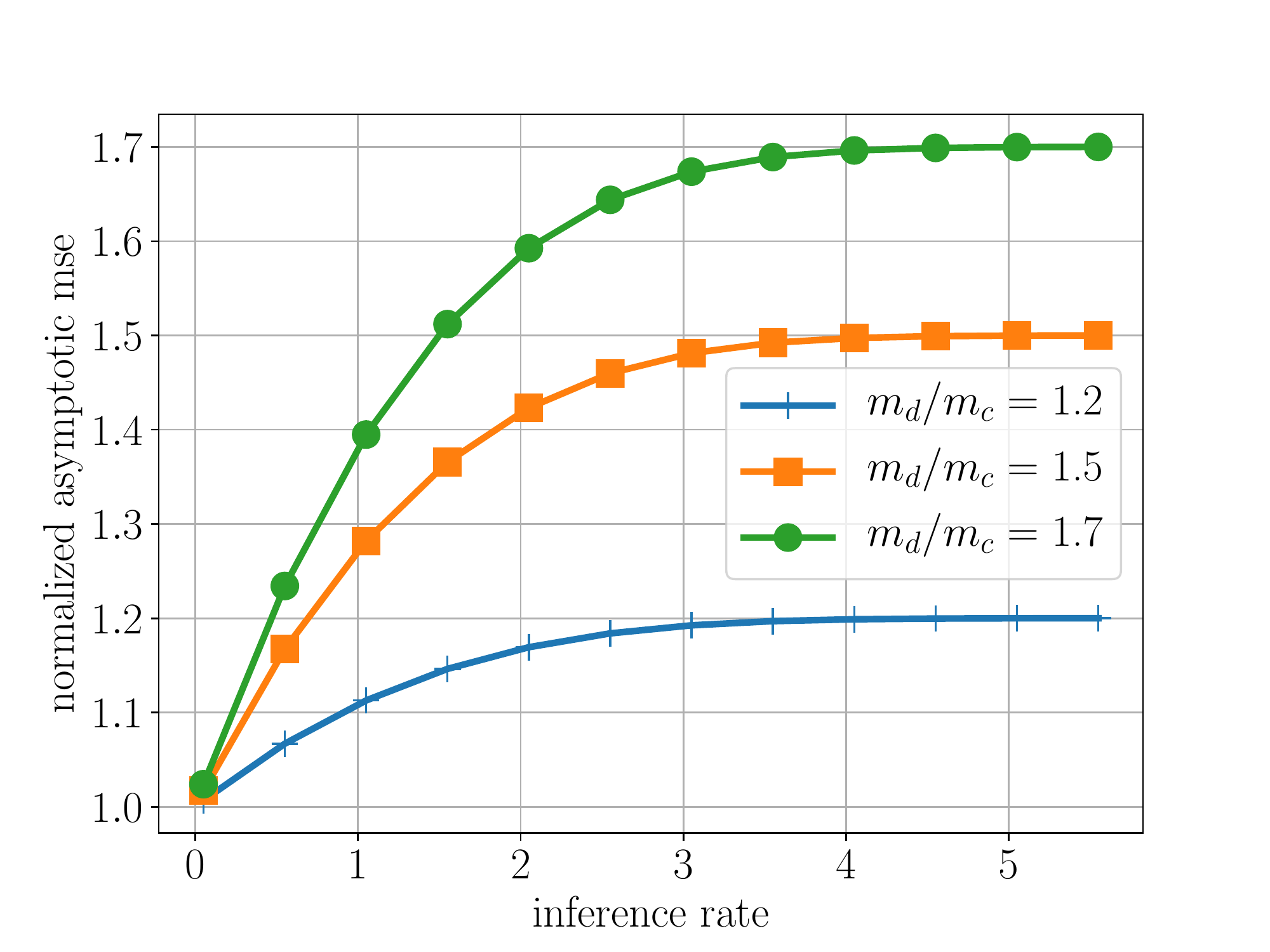} 
		\caption{Variation of asymptotic MSE with inference rate for various edge MSEs.}
	\label{fig:mtvsrate}
\end{figure}
\begin{figure} 
	\centering
\includegraphics[scale=1, width=0.95\columnwidth]{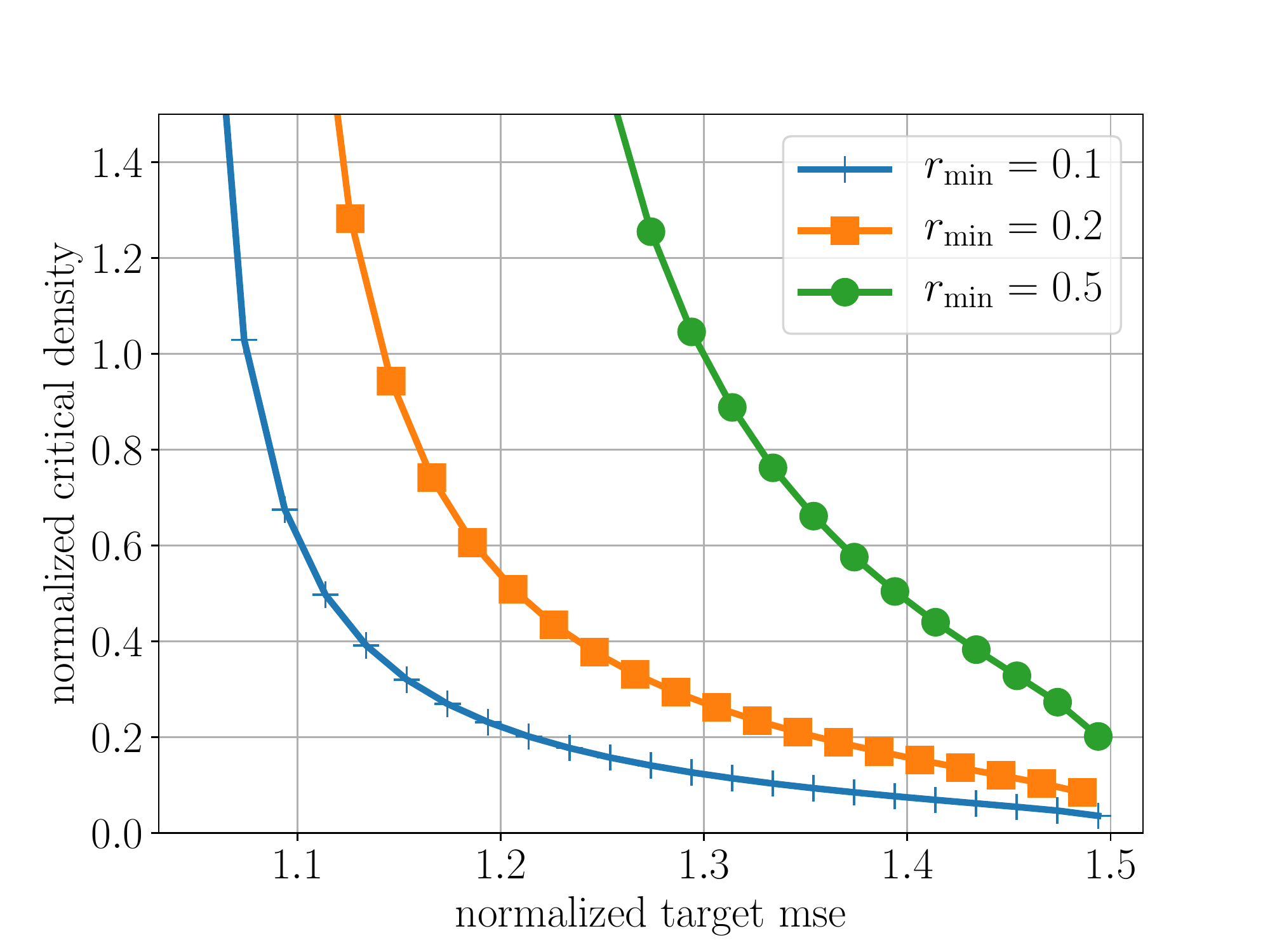} 
		\caption{Variation in critical density with the target MSE for different inference rates.}
	\label{fig:dnstyvsmt}
\end{figure}
\begin{figure} 
	\centering
	\includegraphics[scale=1, width=0.95\columnwidth]{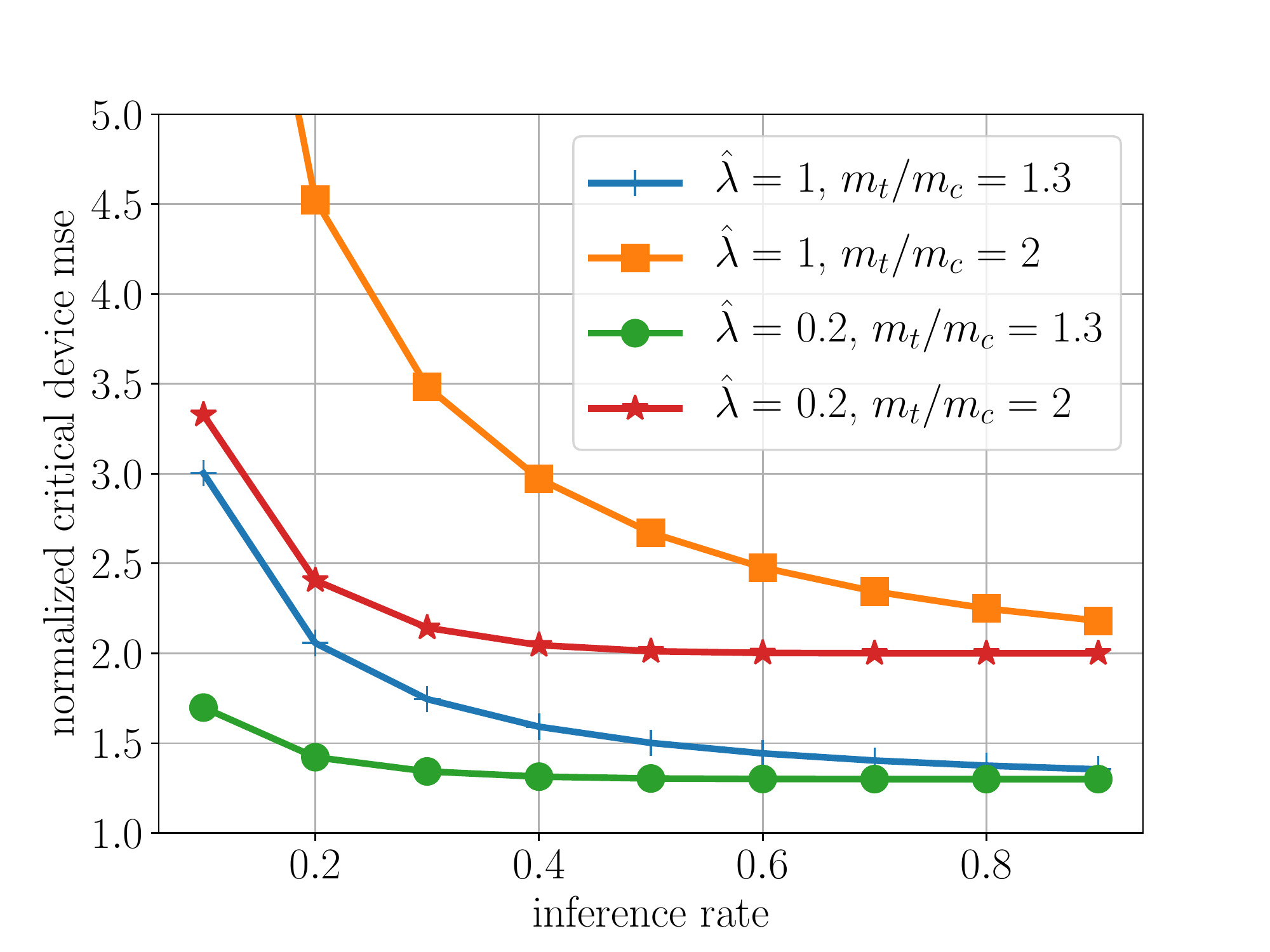} 
	\caption{Variation of critical edge MSE with inference rate for different AP densities and target MSEs.}
	\label{fig:me_nvsr_t}
\end{figure}

\begin{cor}\label{cor:minmse}\textbf{Asymptotic MSE.}  The average output MSE  at asymptotically high AP density is 
\begin{align*}
\mse{\mathrm{asy}} & \triangleq \lim_{\dnsty \to \infty} \avmse{} = 
 \mse{d} - (\mse{d}-\mse{c})\exp\left(-\hgc\circ{\srtransform\left(\infrate\right)} \right).
\end{align*}
\end{cor}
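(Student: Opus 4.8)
The plan is to read the result off directly from Lemma~\ref{lem:avmse}, since the only quantity appearing in the expression for $\avmse{}$ that depends on the AP density $\dnsty$ is the average load $\avload{}$. First I would recall from that lemma
\begin{equation*}
\avmse{} = \mse{d} - (\mse{d}-\mse{c})\exp\left(-\hgc\circ{\srtransform\left(\avload{}\infrate \right)}\right),
\end{equation*}
and observe that neither the inference rate $\infrate = \frac{\querysize}{\res(\del{t}-\del{c})}$ nor the model accuracies $\mse{d},\mse{c}$ vary with $\dnsty$; the density dependence is confined entirely to the argument $\avload{}\infrate$ through $\avload{}$.

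Next I would invoke Assumption~2, namely $\avload{} = 1 + \frac{1.28}{\ndnsty}$ with $\ndnsty = \dnsty/\userdnsty$, and rewrite it as $\avload{} = 1 + \frac{1.28\,\userdnsty}{\dnsty}$. Holding the device density $\userdnsty$ fixed and sending $\dnsty \to \infty$ drives the second term to zero, so $\avload{} \downarrow 1$. Since $\srtransform(x)=2^{x}-1$, $\hgc(x)=\sqrt{x}\arctan(\sqrt{x})$, and the exponential are all continuous, I would then pass the limit through the composition to obtain $\hgc\circ{\srtransform(\avload{}\infrate)} \to \hgc\circ{\srtransform(\infrate)}$, which yields the claimed $\mse{\mathrm{asy}}$.

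There is no substantive obstacle: the corollary is an immediate limit of Lemma~\ref{lem:avmse}, and the only verification needed is that $\avload{} \to 1$ as the network densifies, which is transparent from the closed form in Assumption~2. The one point worth recording is that $\avload{}$ approaches $1$ monotonically from above; combined with the monotonicity of $\hgc\circ\srtransform$ noted after Lemma~\ref{lem:delaydist}, this shows that $\avmse{}$ decreases monotonically to its infimum $\mse{\mathrm{asy}}$ as density grows, which is exactly the saturation behavior exhibited in Figure~\ref{fig:msevsdensity}.
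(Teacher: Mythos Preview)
Your proposal is correct and follows exactly the paper's own argument, which simply substitutes $\lim_{\dnsty\to\infty}\avload{}=1$ into Lemma~\ref{lem:avmse}. Your added remarks on continuity and monotone convergence to $\mse{\mathrm{asy}}$ are sound elaborations but not needed for the bare corollary.
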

\begin{proof}
Follows by replacing $\lim_{\dnsty\to\infty}\avload{} = 1$ in Lemma \ref{lem:avmse}.
\end{proof}
This shows that   at high AP density, the gain in MSE from cloud inference is limited by the inference rate, as even if the entire bandwidth is allocated  to a single device, the cloud inference delay may not be lower than the delay budget.  However, increasing bandwidth or decreasing inference rate reduces this MSE.
\begin{align*}
\text{Also }
\lim_{\infrate \to \infty}\mse{\mathrm{asy}}  = \mse{d}   \text{ and }
\lim_{\infrate \to 0}\mse{\mathrm{asy}}   = \mse{c}.
\end{align*} At low inference rate, the asymptotic  MSE  approaches that of the cloud model; and that of the edge at  high inference rate.  
 Figure \ref{fig:mtvsrate} shows the variation of normalized asymptotic MSE ($\mse{\mathrm{asy}}/\mse{c}$) between these extremes.

\begin{cor}\label{cor:cricdnsty}\textbf{Critical density.} The minimum AP density 
to guarantee   an inference accuracy ($\mse{t}$) is   
\small
	\begin{multline}	\dnsty_c  \triangleq 1.28 \userdnsty \left\{ \frac{1}{\infrate}\log_2\left(1 + \hgc^{-1}\left[\log\left( \frac{\mse{d} -\mse{c}}{\mse{d}- \mse{t}}\right)\right]\right)   -1 \right\}^{-1},\\
\forall\,\, \mse{t} > \mse{\mathrm{asy}}. 
\end{multline}
\normalsize
\end{cor}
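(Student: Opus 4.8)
The plan is to invert the closed-form average MSE of Lemma~\ref{lem:avmse} with respect to the AP density. The guarantee ``inference accuracy $\mse{t}$'' means that the average MSE must not exceed $\mse{t}$, i.e. $\avmse{} \leq \mse{t}$. Substituting Lemma~\ref{lem:avmse} and isolating the exponential, this requirement is equivalent to
\begin{equation*}
\frac{\mse{d}-\mse{t}}{\mse{d}-\mse{c}} \leq \exp\left(-\hgc\circ\srtransform\left(\avload{}\,\infrate\right)\right).
\end{equation*}
Since $\mse{c}\leq\mse{t}\leq\mse{d}$, the left-hand side lies in $(0,1]$; applying the (order-preserving) logarithm and then negating both sides turns this into $\hgc\circ\srtransform(\avload{}\,\infrate) \leq \log\left((\mse{d}-\mse{c})/(\mse{d}-\mse{t})\right)$.

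First I would peel off the two outer functions using monotonicity. Both $\hgc(x)=\sqrt{x}\arctan(\sqrt{x})$ and $\srtransform(x)=2^{x}-1$ are strictly increasing on $\real{+}$ and hence invertible, with $\srtransform^{-1}(y)=\log_2(1+y)$. Applying $\hgc^{-1}$ and then $\srtransform^{-1}$ preserves the inequality and bounds the average load:
\begin{equation*}
\avload{} \leq \frac{1}{\infrate}\log_2\left(1+\hgc^{-1}\left[\log\left(\frac{\mse{d}-\mse{c}}{\mse{d}-\mse{t}}\right)\right]\right) \triangleq A .
\end{equation*}
Next I would insert the load model of Assumption~2, namely $\avload{}=1+1.28/\ndnsty = 1 + 1.28\,\userdnsty/\dnsty$, so that the constraint collapses to $1.28\,\userdnsty/\dnsty \leq A-1$. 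Solving for $\dnsty$ gives $\dnsty \geq 1.28\,\userdnsty\,(A-1)^{-1}$, and the smallest admissible value reproduces the claimed $\dnsty_c$. Since $\avmse{}$ is strictly decreasing in $\dnsty$ (the load $\avload{}$ decreases toward $1$), every $\dnsty\geq\dnsty_c$ meets the target, confirming that $\dnsty_c$ is indeed the \emph{minimum} density.

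The step requiring the most care is the feasibility restriction $\mse{t}>\mse{\mathrm{asy}}$. A positive, finite $\dnsty_c$ exists only when $A>1$, so I would verify that $A>1$ is exactly equivalent to $\mse{t}>\mse{\mathrm{asy}}$. Unwinding $A>1$ through $\srtransform$ and $\hgc$ reverses the same algebra as above but with $\avload{}$ replaced by $1$, which is precisely the asymptotic MSE of Corollary~\ref{cor:minmse}; thus $A>1 \Leftrightarrow \mse{d}-(\mse{d}-\mse{c})\exp(-\hgc\circ\srtransform(\infrate)) < \mse{t}$, i.e. $\mse{\mathrm{asy}}<\mse{t}$. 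This matches the intuition that $\avmse{}$ decreases to its infimum $\mse{\mathrm{asy}}$ as $\dnsty\to\infty$, so any target at or below $\mse{\mathrm{asy}}$ cannot be met at finite density; outside this regime the expression for $\dnsty_c$ becomes non-positive or undefined. Establishing this equivalence, which ties the corollary's domain back to Corollary~\ref{cor:minmse} and guarantees well-posedness, is the main obstacle; the remaining manipulations are routine given the monotonicity of $\hgc$ and $\srtransform$.
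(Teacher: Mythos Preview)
Your proposal is correct and follows essentially the same approach as the paper: start from $\avmse{}\leq\mse{t}$, substitute Lemma~\ref{lem:avmse}, invert $\hgc$ and $\srtransform$ using monotonicity, and plug in the load expression~(\ref{eq:avload}) to isolate $\dnsty$. Your additional verification that the feasibility condition $\mse{t}>\mse{\mathrm{asy}}$ is exactly equivalent to $A>1$ is more thorough than the paper, which simply states the restriction without justification.
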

\begin{proof}
See Appendix \ref{sec:proofcricdnsty}.
\end{proof}
 Figure \ref{fig:dnstyvsmt} shows the variation of normalized critical density ($\dnsty_c/\userdnsty$) with the  normalized target MSE $\mse{t}/\mse{c} \,\forall\,\, \mse{t}   \text{ s.t. } \mse{\mathrm{asy}} < \mse{t} \leq \mse{d}$. As can be seen, the network needs to be provisioned with higher AP density with increasing accuracy demands. Moreover,  critical density  also increases with increase in inference rate (or decrease in bandwidth) for a target MSE.

As highlighted by Corollary \ref{cor:minmse},  even at high infrastructure density  the gain from cloud inference may be limited -- guaranteeing  an overall accuracy, thus, requires a minimum accuracy from the edge inference model.
\begin{cor}\label{cor:cricMSE}\textbf{Critical edge MSE.} The maximum allowed device MSE required to guarantee an overall MSE lower than $m_t$   is 
	\begin{align}
	\mse{d, \mathrm{max}} \triangleq \mse{c}  \frac{\mse{t}/\mse{c} - \exp\left(-\hgc\circ{\srtransform\left(\avload{}\infrate \right)}\right) }{1 - \exp\left(-\hgc\circ{\srtransform\left(\avload{}\infrate \right)}\right)}
	\end{align}
\end{cor}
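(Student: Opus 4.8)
The plan is to treat the average-MSE expression of Lemma \ref{lem:avmse} as an affine function of the edge MSE $\mse{d}$ and then invert the target-accuracy constraint $\avmse{} \leq \mse{t}$. First I would introduce the shorthand $E \triangleq \exp\left(-\hgc\circ{\srtransform\left(\avload{}\infrate\right)}\right)$ and rewrite Lemma \ref{lem:avmse} as
\begin{equation}
\avmse{} = (1-E)\,\mse{d} + E\,\mse{c},
\end{equation}
which exhibits $\avmse{}$ as a convex combination of $\mse{d}$ and $\mse{c}$, since $E \in (0,1]$ (this also re-derives $\mse{c} \leq \avmse{} \leq \mse{d}$ noted after Lemma \ref{lem:avmse}).

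Next I would record the monotonicity that makes the inversion valid: because the coefficient $1-E$ of $\mse{d}$ is non-negative, and strictly positive whenever $\avload{}\infrate > 0$ (i.e. $E < 1$), $\avmse{}$ is non-decreasing in $\mse{d}$. Hence the design requirement $\avmse{} \leq \mse{t}$ is equivalent to an upper bound on $\mse{d}$, and the largest admissible edge MSE is attained at the equality $\avmse{} = \mse{t}$.

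Then I would solve $(1-E)\,\mse{d} + E\,\mse{c} = \mse{t}$ for $\mse{d}$, obtaining
\begin{equation}
\mse{d, \mathrm{max}} = \frac{\mse{t} - E\,\mse{c}}{1-E},
\end{equation}
and factor out $\mse{c}$ and re-expand $E$ to reach the stated closed form. A short consistency check closes the argument: $\mse{d, \mathrm{max}} \geq \mse{c}$ reduces to $\mse{t} \geq \mse{c}$, so the bound is compatible with the modeling assumption $\mse{c}\leq\mse{d}$ precisely when the target does not demand better than the cloud model itself.

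The only delicate point — the ``obstacle,'' such as it is — is the degenerate case $E = 1$, which arises when $\avload{}\infrate = 0$: there $\avmse{} = \mse{c}$ identically, the affine map ceases to be invertible, and the displayed denominator vanishes. I would remark that this regime corresponds to the cloud output always meeting the delay budget, so every edge model suffices and the constraint is vacuous; for all $\avload{}\infrate > 0$ the formula holds as written.
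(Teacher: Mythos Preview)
Your argument is correct and is precisely the ``algebraic manipulation'' the paper invokes: starting from the expression in Lemma~\ref{lem:avmse}, you impose $\avmse{}\le\mse{t}$ and solve for $\mse{d}$, which is exactly what the paper's proof does (it just refers back to inequality~(a) from the critical-density derivation). Your added remarks on monotonicity and the degenerate case $E=1$ are sound and more explicit than the paper, but the core route is identical.
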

\begin{proof}
See Appendix \ref{sec:proofcricdnsty}.
\end{proof}

From above, it is evident that as $
\avload{}\infrate\to \infty$,   $\mse{d,\mathrm{max}} \to m_t $, 
or as the cloud inference delay increases beyond the target delay, the edge  inference accuracy needs to be at least at par  with the target accuracy. Figure \ref{fig:me_nvsr_t} shows the variation of normalized critical edge MSE ($ \mse{d,\mathrm{max}}/\mse{c}$) for different system parameters. It can be seen that, for any given inference rate, higher AP density allows edge inference model to be less accurate, while meeting the overall target inference accuracy. Corollary \ref{cor:cricMSE} and the aforementioned  insights emphasize the constraints imposed by the network parameters on the design of edge inference model.


\section{Conclusion}
This paper proposes an analytical framework for characterizing the performance of   distributed inference in wireless cellular networks. To the author's best knowledge, this is the first work to present   the trade-offs involved in the co-design of cellular networks and distributed inference. This work can be extended to analyze and compare different policies for offloading inference and their impact on network design. The network model can be extended to analyze the impact of congestion in the cloud transport network (as in {\cite{KoHanHua18}) and user association in heterogeneous cellular networks (as in \cite{ParLee18}) on  distributed inference.

\appendices
\section{}\label{sec:proofdelaydist}
\begin{IEEEproof}[Derivation of cloud inference delay distribution]
\begin{align*}
&\pr(\delrnd \leq d)  = \pr\left(\rate_u \geq \frac{q}{\del{} - \del{c}}\right)\\
& \overset{(a)}= \pr\left(\SINR_u \geq \srtransform\left(\frac{\avload{}q}{\res(\del{} - \del{c})}\right) \right) \\
 & \overset{(b)}=  \exp\left(-\hgc\circ{\srtransform\left(\frac{\avload{}q}{\res(\del{}-\del{c})}\right)}\right), 
 \end{align*}
where (a) follows using (\ref{eq:ratemodel}) and (\ref{eq:avload})  and  $\srtransform(x) = 2^{x}-1$; and (b) follows by using uplink $\SINR$ distribution from \cite{SinZhaAnd14}.
\end{IEEEproof}

\section{}\label{sec:proofcricdnsty}
\begin{IEEEproof}[Derivation of critical density] For the average inference MSE to be less than a target, i.e., 
\begin{multline}
\avmse{} \leq \mse{t} \\\overset{(a)}{\implies} \mse{d} - (\mse{d}-\mse{c})\exp\left(-\hgc\circ{\srtransform\left(\avload{}\infrate \right)}\right) \leq \mse{t}, \\\text{ or } 
\srtransform\left(\avload{}\infrate \right) \leq \hgc^{-1}\left(\log\left( \frac{\mse{d} -\mse{c}}{\mse{d}- \mse{t}}\right)\right)  ,
\end{multline} where (a) follows using Lemma \ref{lem:avmse}. The critical density is arrived at by replacing (\ref{eq:avload}) in above.
\end{IEEEproof}
\begin{IEEEproof}[Derivation of critical edge MSE] Follows by algebraic manipulation on (a) above.\end{IEEEproof}

\bibliographystyle{ieeetr}
\bibliography{IEEEabrv,/Users/sarabjotsingh/Dropbox/research/refoffload}

\end{document}